\def\usetikz{0} 
	\pgfplotsset{compat=1.5.1,
		small/.style={
			tick label style={font=\scriptsize},
			label style={font=\small},
			height=2.75cm,
			width=3.75cm
		}
	}
	\tikzset{terminal/.style=[node distance=1.25cm,
		fblock/.style={rectangle,minimum size=1.0cm,
		thick,draw=black},
		sum/.style={circle,minimum size=0.5cm,
		thick,draw=black},
		every new ->/.style={-latex},
		every new --/.style={},
		hvpath/.style={to path={-| (\tikztotarget)}},
		vhpath/.style={to path={|- (\tikztotarget)}}
	}
\newtheorem{lem}{Lemma}
\newtheorem{thm}{Theorem} 
\theoremstyle{definition}
\newtheorem{defn}{Definition}
\newcommand{\setreal}{\mathbb{R}}
\newcommand{\setint}{\mathbb{Z}}
\newcommand{\setnat}{\mathbb{N}}
\newcommand{\Fint}{F_{\text{int}}}
\newcommand{\nlayers}{L}
\newcommand{\npoles}{n}
\newcommand{\nunits}{M}
\newcommand{\iApp}{i}
\newcommand{\itdnn}{n}
\newcommand{\igobf}{m}
\newcommand{\spm}{\hat{k}_s}
\newcommand{\spd}{k_s}
\newcommand{\samplingT}{t_s}
\newcommand{\inpbound}{\beta}
\newcommand{\inpclass}{\mathcal{U}_\inpbound}
\title{\LARGE \bf
System identification of biophysical neuronal models*
}
\author{Thiago B. Burghi$^{1}$, Maarten Schoukens$^{2}$
and Rodolphe Sepulchre$^{1}$
\thanks{* Slightly extended pre-print of the paper to be presented 
at the 59th Conference on Decision and Control, held remotely
between December 14-18, 2020.
The research leading to these results has received
funding from the Coordena\c{c}\~{a}o de Aperfei\c{c}oamento 
de Pessoal de N\'{i}vel Superior (CAPES) -- Brasil 
(Finance Code 001) and the 
 European Research Council (under the
Advanced ERC Grant Agreement Switchlet n.670645).}
\thanks{$^{1}$Thiago B. Burghi ({\tt\small tbb29@cam.ac.uk})
and Rodolphe Sepulchre 
({\tt\small r.sepulchre@eng.cam.ac.uk}) are with
the Department of Engineering, Control Group, 
University of Cambridge, CB2 1PZ, UK.}%
\thanks{$^{2}$Maarten Schoukens ({\tt\small m.schoukens@tue.nl}) is with the Department of Electrical
Engineering, Eindhoven University of Technology, 5612 AZ
Eindhoven, Netherlands.
}%
}
\begin{document}

\maketitle
\thispagestyle{empty}
\pagestyle{empty}



\begin{abstract}         
After sixty years of quantitative biophysical modeling 
of neurons, the identification of neuronal dynamics 
from input-output data remains a challenging problem, 
primarily due to the inherently nonlinear nature of excitable behaviors.
By reformulating the problem in terms of the identification
of an operator with fading memory, we explore a simple approach 
based on a parametrization given by a series interconnection 
of Generalized Orthonormal Basis Functions (GOBFs) and 
static Artificial Neural Networks. We show that GOBFs are 
particularly well-suited to tackle the identification 
problem, and provide a heuristic for selecting GOBF poles 
which addresses the ultra-sensitivity of neuronal behaviors. 
The method is illustrated on the identification of a 
bursting model from the crab stomatogastric ganglion.
\end{abstract}
          

\section{Introduction}

%

This  paper explores the potential of a simple
parametrization for the identification of a nonlinear
system that can be represented as the feedback 
interconnection between an integrator and an operator
with fading memory \cite{boyd_fading_1985}, as shown in Figure 
\ref{fig:passive_fm_feedback}. 
Our interest in this particular structure is that it
encompasses most biophysical models of neuronal circuits
\cite{burghi_feedback_2020,burghi_feedback_2021}. In such 
models, the integrator represents the neuronal membrane model, 
whereas the operator with fading memory represents the 
input-output mean-field relationship between voltage 
and the internal currents arising from the opening 
and closing of myriad ion channels. Feedback between
these two components destroys the fading memory property, and indeed this 
property is ruled out by observed neuronal behaviors 
such as excitability, autonomous oscillations, and chaos. 

The motivation to preserve the biophysical decomposition
of Figure \ref{fig:passive_fm_feedback} 
in a system identification setting is obvious: the
identification of nonlinear systems with fading memory
is a mature topic
\cite{ljung_convergence_1978,schram_system_1996,schoukens_identification_2017},
whereas the identification of feedback systems lacking this
property is challenging. In the latter case, it is difficult to
obtain any guarantees on the behavior of the identified system. 
While the availability of noiseless state measurements allows
some results to be established \cite{manchester_identification_2011-1},
internal neuronal states cannot be measured. This 
limitation poses further issues concerning model identifiability 
\cite{wigren_nonlinear_2015} 
and optimization tractability
\cite{abarbanel_dynamical_2009}. Furthermore, the lack
of fading memory prohibits the use of contraction constraints
recently advocated to improve the trainability of recurrent
models \cite{revay_contracting_2020}.
To avoid some of these difficulties, we explore directly 
identifying the nonlinear fading memory component using a 
feedforward parametrizaton with universal approximation properties: 
the series interconnection of Generalized Orthonormal Basis 
Functions (GOBFs) with a static Artificial Neural Network 
(ANN) nonlinearity.

\begin{figure}[t]
	\centering
	\if\usetikz1
	\begin{tikzpicture}
		\node (v_dyn) at (0,0) [fblock,align=center]
			{$\displaystyle\frac{\samplingT}{c}
			\,\displaystyle\frac{1}{z-1}$};		
		\node (int_dyn) [fblock,
		label={185:$y_k$},
		below= 0.3 of v_dyn,
		align=center]{Nonlinear \\ Fading memory};
		\node (sum1) [sum,left=of v_dyn,
				label={175:$+$},label={-85:$-$}] {};		
		\coordinate[left= 1 of sum1] (input);
		\coordinate[above= .5 of sum1] (noise);
		\coordinate[right=of v_dyn] (split1);	
		\coordinate[right=of split1] (output);		
		\draw[-latex] (sum1) -- (v_dyn);
		\draw[-latex] (input) -- (sum1) 
			node[left,pos=0]{$i_k$};
		\draw[-latex] (v_dyn) -- (split1) -- (output) 
		node[right,pos=1]{$v_k$};
		\draw[-latex] (split1) |- (int_dyn);
		\draw[-latex] (int_dyn) -| (sum1);
	\end{tikzpicture}
	\else
		\includegraphics[scale=0.9]{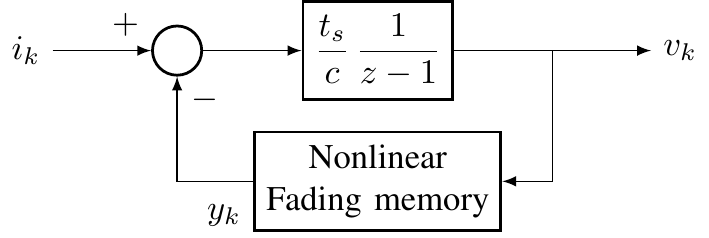}
	\fi
	\caption{The general model structure of a neuronal
			system.}
	\label{fig:passive_fm_feedback}
\end{figure}

The contributions of the present paper are
the following: we show how the problem of neuronal system
identification can be cast in terms of a problem of
identifying a fading memory operator; we present a proof
that the series interconnection of GOBFs with ANNs implements 
a universal approximator; and we provide a heuristic for choosing 
GOBF poles to identify neuronal systems, based on the input-output
property of ultrasensitivity
\cite{drion_neuronal_2015,franci_sensitivity_2019}.
%



The paper is structured in the following way:
In Section \ref{sec:problem}, we formulate the
identification problem and argue that 
the model class
of Figure \ref{fig:passive_fm_feedback} encompasses
most neuronal systems. In Section \ref{sec:identification}
we prove that the proposed model structure is a universal 
approximator of fading memory operators and
discuss how to choose the poles of the GOBFs for
neuronal systems. In Section
\ref{sec:example}, we illustrate how the model
structure can be used to identify a bursting neuronal
model. 



\section{Problem statement}
\label{sec:problem}

In this section, we introduce the general problem of
neuronal system identification. We restrict the treatment
in this paper to single-input single-output systems.
We write $\setint_+ = \{0,1,\dotsc\}$ and 
$\setnat = \{1,2,\dotsc\}$. The set 
$\setreal(\setint_+)$ is the set of real-valued sequences
defined on $\setint_+$, and $\ell_\infty(\setint_+)$ is the
set of all real-valued bounded sequences defined on 
$\setint_+$. 

\subsection{Neuronal systems}

A neuronal system dictates the evolution of the
cellular membrane potential of a neuron, denoted by
$v\in\setreal$. The voltage $v$ changes in time due
to the flow of ionic currents through the membrane,
which can be reasonably represented by a capacitor.
Thus any discrete-time single-compartment
biophysical neuronal system has the form
\begin{equation}
			\label{eq:DT_v_system}
				c\frac{v_{k+1}-v_k}{\samplingT} =  
				-y_k + \iApp_k 
\end{equation}
where $c>0$ is the membrane capacitance, $\samplingT>0$ is 
a sampling period, $i_k\in\setreal$ is an external applied
current, and 
\begin{equation}
	\label{eq:fm_channels} 
	y_k = (\Fint v)_k
\end{equation}
represents the aggregate internal ionic currents traversing
the neuronal membrane (we assume forward-Euler 
discretization of the physical system). 
The time-invariant 
operator $\Fint:\ell_\infty(\setint_+)\to\setreal(\setint_+)$
represents the dynamics of ion channels embedded in the
membrane, and it is the accurate modeling of this operator
which allows neuronal behavior to be successfully
reproduced in neurocomputational studies.

\subsection{The internal current operator has fading
memory} 

The operator $\Fint$ is a mean-field model of complex
molecular events dictating the opening and closing of ion
channels. Ever since the pioneering work of Hodgkin and
Huxley \cite{hodgkin_quantitative_1952}, ion channel
kinetics have been primarily modeled using step response
experiments  
implemented by means of the so-called
\textit{voltage-clamp} technique.
It is the design of such experiments that led to the foundation 
of biophysical modeling in neurophysiology. 
In an experimental setting, the voltage output $v_k$ is
measured, and the experimenter has full control over the 
current input $i_k$. The voltage-clamp experiment implements a
high-gain output feedback law
\begin{equation}
	\label{eq:output_feedback} 
	i_k = \gamma(v_k - r_k)
\end{equation}
where $r_k$ is a reference signal chosen by the
experimenter, and $\gamma>0$ is the feedback gain.
When the voltage-clamp experiment is carried out in a 
biological neuron, it is observed that for any sufficiently high gain
$\gamma>0$, the measured $v_k$ closely tracks the reference 
$r_k$; furthermore, the tracking error decreases as $\gamma$ is 
increased. This is in stark contrast with the open-loop behavior
of the neuron, as illustrated in Figure \ref{fig:HH_full_Example}.
The behavior observed experimentally confirms the 
basic capacitive modeling assumption behind
\eqref{eq:DT_v_system}: it is the behavior of a system with
relative degree one, whose inverse (the system with input 
$v_k$ and output $i_k$) can be obtained at the limit of
high gain $\gamma \to \infty$. 

Furthermore, the voltage-clamp experiment indicates that
the system inverse, and thus also $\Fint$, have fading
memory, in the following sense:
\begin{defn}[\cite{park_criteria_1992,boyd_fading_1985}]
For arbitrary $\inpbound>0$, 
consider the class $\inpclass\subset \ell_\infty(\setint_+)$ 
of sequences $u$ such that $\sup_{k\in\setint_+} |u_k| < \inpbound$.
The system operator 
$F:\ell_\infty(\setint_+)\to\setreal(\setint_+)$ is 
said to have fading memory on $\inpclass$ if there 
exists a decreasing sequence $w:\setint_+\to(0,1]$ with
$\lim_{k\to\infty} w_k = 0$ such that given an $\epsilon > 0$, 
there is a $\delta > 0$ such that
	$\max_{m\in\{0,1,\dotsc,k\}} 
	\| u_m - \tilde{u}_m \|_\infty w_{k-m} < \delta$
implies
	$\left| (Fu)_k - (F\tilde{u})_k\right| < \epsilon$
for every $u,\tilde{u} \in \inpclass$ and every 
$k\in\setint_+$. 
\end{defn}

In words, a system operator has fading memory if 
\textit{``two inputs which are close in the recent past 
(but not necessarily close in the remote past) yield 
outputs which are close in the present''} 
\cite{boyd_fading_1985}.
In particular, fading memory operators share many
properties with stable linear systems. An example of
this behavior can be seen in Figure 
\ref{fig:HH_full_Example} (left column): the response of 
$\Fint$ to a step input is reminiscent of that of a linear system, 
and its output tends to a unique steady-state value
whenever its input tends to a constant as $k\to \infty$. 

\begin{figure}[t]
	\centering
	\begin{minipage}[t]{0.45\linewidth}
		\if\usetikz1
		\begin{tikzpicture}
				\foreach \i/\j/\k in 
				{1/blue/solid,2/purple/solid,3/red/solid}{
				\begin{axis}[
					small,
					ylabel={$(\Fint v)_k$}, 
					axis y line = left,
					axis x line = bottom, 		
					ymin=-1200,ymax=1000,
					each nth point={10},
					xticklabels=\empty,
					ytick={-500,0,500},
					at={(0,{(-\i+3)*-1.50cm})}
					]			
					\addplot [color=\j,\k,thick]
						table[x index=0,y index=\i] 
					{./Data/HH_ions_stepExample_output.txt};	
				\end{axis}
			}
			\begin{axis}[small,
				ylabel={$v_k$}, 
				axis y line = left,
				xlabel={$k\samplingT\;\mathrm{[ms]}$}, 
				axis x line = bottom,
				ymax=-20,
				each nth point={10},
				ytick={-65,-45,-25},
				at={(0,3*-1.5cm)}
				]
				\foreach \i/\j/\k in 
					{1/blue/solid,2/purple/solid,3/red/solid}{
					\edef\temp{\noexpand
						\addplot [color=\j,\k,thick]
						table[x index=0,y index=\i] 
						{./Data/HH_ions_stepExample_input.txt};
					}\temp
				}
			\end{axis}
		\end{tikzpicture}
		\else
			\includegraphics[scale=1]{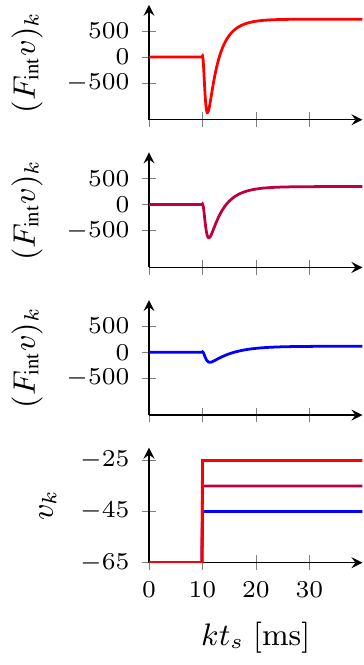}
		\fi		
	\end{minipage}\hfill
	\begin{minipage}[t]{0.45\linewidth}		
		\if\usetikz1
		\begin{tikzpicture}
				\foreach \i/\j/\k in 
				{1/blue/solid,2/purple/solid,3/red/solid}{
				\begin{axis}[small,
					ylabel={$v_k$}, 
					axis y line = left,
					axis x line = bottom, 		
					ymin=-80,ymax=60,
					each nth point={3},
					xticklabels=\empty,
					ytick={-65,0},
					at={(0,{(-\i+3)*-1.50cm})}
					]			
					\addplot [color=\j,\k,thick]
						table[x index=0,y index=\i] 
							{./Data/HH_full_stepExample_output.txt};	
	
				\end{axis}
			}
			\begin{axis}[small,
				ylabel={$\iApp_k$}, 
				axis y line = left,
				xlabel={$k\samplingT\;\mathrm{[ms]}$}, 
				axis x line = bottom,
				ymax=8,
				each nth point={10},
				ytick={2,4,6},
				at={(0,3*-1.5cm)}
				]
				\foreach \i/\j/\k in 
					{1/blue/solid,2/purple/solid,3/red/solid}{
					\edef\temp{\noexpand
						\addplot [color=\j,\k,thick]
						table[x index=0,y index=\i] 
						{./Data/HH_full_stepExample_input.txt};
					}\temp
				}
			\end{axis}
		\end{tikzpicture}
		\else
			\includegraphics[scale=1]{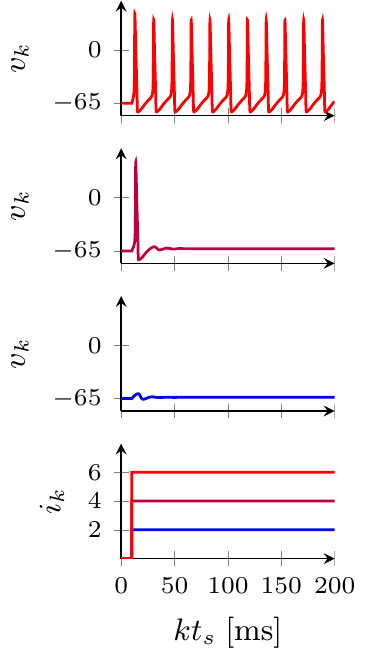}
		\fi	
	\end{minipage}
	\caption{
	Left column: Typical aggregate ionic current behavior
	in a biological neuron under voltage-clamp. Three
	different experiments are shown in which the
	reference $r_k$ is stepped to $-45$, $-35$ and
	$-25$ mV at $10$ ms. Right column: Typical behavior 
	of a neuron (without voltage-clamp) subject to steps 
	of different amplitudes at the input $i_k$. 
	Voltages are in $\mathrm{[mV]}$, and currents
	are in $\mathrm{[\upmu A/cm^2]}$.}	
	\label{fig:HH_full_Example}
\end{figure}	

While the fading memory of $\Fint$ can be assumed
from empirical evidence, it
can also be derived from continuous-time
biophysical conductance-based neuronal models. In 
\cite{burghi_feedback_2021}, it is shown that
all such models have an exponentially contracting
internal dynamics (as defined by
\cite{lohmiller_contraction_1998}), and that 
choosing a sufficiently small $\samplingT$ preserves
this property in the discretized system. Ultimately,
this implies that in conductance-based models, the
operator $\Fint$ has fading memory with an
exponentially fast fading rate.

\subsection{Identification problem}
\label{sec:statement}
 
We pose the following problem statement:
\textit{Identify the input-output behavior of the system
\eqref{eq:DT_v_system}-\eqref{eq:fm_channels}, where the
operator $\Fint$ has fading memory on $\inpclass$.}
In other words, we search for a model that accurately 
reproduces the set of admissible input-output trajectories 
$(i,v) \in \setreal(\setint_+)^2$ of the system being 
identified. 

In this paper, we focus on the question of 
model structure selection, and ignore the issue of
noise. Noise plays an important role in neuronal dynamics,
and input current (process) noise is particularly relevant 
\cite{gerstner_neuronal_2014}. However, as long as measurement 
noise is kept low, the methods discussed in the next
section should not be significantly affected (see 
\cite{burghi_feedback_2021} for related results where
input noise is taken into account).


\section{Identification method}
\label{sec:identification}

System operators that have fading memory can be uniformly
approximated  by simple classes of nonlinear systems,
called universal approximators. The parametrization of
fading memory operators has a long history going back to
Volterra series with notable contributions by Boyd and Chua
in \cite{boyd_fading_1985} and a specific link to
neuroscience (though in a different context from the
present paper) by Maas and Sontag \cite{maass_neural_2000}.
To tackle the problem stated in Section 
\ref{sec:statement}, we propose to use the ``direct approach''
of closed-loop identification \cite{forssell_closed-loop_1999}
and directly identify $\Fint$ with a model structure based on 
the series interconnection of Generalized Orthonormal Basis 
Functions (GOBFs)
\cite{ninness_unifying_1997}
and static Artificial Neural Networks (ANNs). This
structure was first proposed by \cite{schram_system_1996}.
In this section, we first define this structure and show 
that it has universal approximation power; then, we 
discuss how it can be tailored to identify neuronal models.

\subsection{A model structure for identifying $\Fint$}

\label{sec:model_structure} 
%
%
We define an ANN with logistic nonlinearities as
follows:

\begin{defn}
Let $\nlayers \in \setnat$ be the number of layers 
in the network. For $\ell = 1,\dotsc, \nlayers$, let 
$\nunits^{(\ell)}\in\setnat$ be the number of 
activation functions in the $\ell^\text{th}$ hidden 
layer, and $\nunits^{(0)}$ and 
$\nunits^{(\nlayers+1)}$ the number of inputs and
outputs of the network, respectively. Given the
weight matrices $W^{(\ell)} \in \setreal^{\nunits^{(\ell+1)}\times 
\nunits^{(\ell)}}$ and the bias vectors $b^{(\ell)} \in 
\setreal^{\nunits^{(\ell+1)}}$, let
\begin{equation}
	y^{(\ell)} = W^{(\ell)} u^{(\ell)} + b^{(\ell)}
\end{equation}
where $\ell=0,1,\dotsc,\nlayers$, and 
\begin{equation}
	\label{eq:act_functions_nn} 
		u_a^{(\ell)} = 
		\left(1+\text{exp}
		\left(-y_a^{(\ell-1)}\right)\right)^{-1}
\end{equation}
where $\ell=1,\dotsc,\nlayers$ and 
$a = 1,\dotsc,\nunits^{(\ell)}$. We define an artificial neural 
network nonlinearity $\psi: u \mapsto y$ by setting 
$u^{(0)} := u$ and $y := y^{(\nlayers)}$, so that
that $\nunits^{(0)} = \text{dim}(u)$ and
$\nunits^{(L+1)} = \text{dim}(y)$.
\end{defn}

Sandberg et al. showed that single-layer
time-delay neural networks constitute a universal
approximator for fading memory operators
\cite[Theorem 1]{sandberg_approximation_1991} and
\cite[Proposition 1]{park_criteria_1992}:

\begin{lem}
\label{lem:tdnn} 
Consider a time-invariant 
causal 
operator 
$F:\ell_\infty(\setint_+)\to\setreal(\setint_+)$ with 
fading memory on $\inpclass$. 
For $\itdnn \in \setnat$, let 
$H^{(\itdnn)}:\ell_\infty(\setint_+)\to
\setreal(\setint_+)^{(\itdnn+1)}$ be given by
\begin{equation}
	\label{eq:tdnn} 
	(H^{(\itdnn)} v)_k = [v_k,v_{k-1},\dotsc,v_{k-\itdnn}]^\top,
\end{equation}
and let $\psi$ be a neural network with $\nlayers = 1$,
$\nunits^{(0)} = \itdnn+1$ and $\nunits^{(2)} =
1$. Then, there are $\itdnn,\nunits^{(1)}>0$ and
real parameters $W^{(0)}, W^{(1)}, b^{(0)},b^{(1)}$
such that for any $\epsilon > 0$,
\[
	|(F v)_k - \psi((H^{(\itdnn)} v)_k)| < \epsilon,
	\quad\quad k\in\setint_+
\]
for all $v \in \inpclass$.
\end{lem}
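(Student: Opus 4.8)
The plan is to split the approximation into two logically independent steps: a \emph{finite-memory truncation} that uses the fading-memory hypothesis to discard the remote past, followed by a \emph{static} universal approximation that invokes the classical Cybenko/Hornik theorem for sigmoidal networks on a compact set. The tap-delay line $H^{(n)}$ extracts the window $(v_k,\dotsc,v_{k-n})$, so the goal is to show that for $n$ large enough the quantity $(Fv)_k$ is, uniformly in $v\in\inpclass$ and in $k$, essentially a continuous function of this window, and then to approximate that function by $\psi$.

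For the first step I would fix $v\in\inpclass$ and $k$ and replace the remote past by a constant pad: let $\tilde v$ agree with $v$ on $\{k-n,\dotsc,k\}$ and equal a fixed value in $[-\inpbound,\inpbound]$ (say $0$) on $\{0,\dotsc,k-n-1\}$, so that $\tilde v\in\inpclass$. In the fading-memory metric the two signals agree on the window, while on the remote past they differ by at most $2\inpbound$, and there the weight satisfies $w_{k-m}\le w_{n+1}$ because $w$ is decreasing. Hence $\max_{m}|v_m-\tilde v_m|\,w_{k-m}\le 2\inpbound\,w_{n+1}$, a bound independent of $v$ and $k$. Since $w_{n+1}\to 0$, the fading-memory property supplies a $\delta$ for the target $\epsilon/2$, and choosing $n$ with $2\inpbound\,w_{n+1}<\delta$ gives $|(Fv)_k-(F\tilde v)_k|<\epsilon/2$ simultaneously for all $v$ and $k$. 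By time-invariance and causality, $(F\tilde v)_k$ depends only on the window, which defines a map $g:[-\inpbound,\inpbound]^{n+1}\to\setreal$ with $g\big((H^{(n)}v)_k\big)=(F\tilde v)_k$.

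Next I would verify that $g$ is continuous, indeed uniformly continuous, since its domain is the compact cube $[-\inpbound,\inpbound]^{n+1}$. Two nearby windows, padded identically in the remote past, agree there and differ on the window by at most $\max_{0\le j\le n}|x_j-x_j'|$ weighted by $w_j\le 1$; fading memory then forces the two outputs close, which is exactly continuity of $g$. With $g$ continuous on a compact set and the logistic function $(1+e^{-\cdot})^{-1}$ serving as a discriminatory sigmoid, the universal approximation theorem for one-hidden-layer networks furnishes a width $\nunits^{(1)}$ and parameters $W^{(0)},W^{(1)},b^{(0)},b^{(1)}$ with $\sup_x|g(x)-\psi(x)|<\epsilon/2$. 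The triangle inequality then yields $|(Fv)_k-\psi((H^{(n)}v)_k)|<\epsilon$ uniformly in $v\in\inpclass$ and $k$, as claimed.

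I expect the finite-memory truncation to be the main obstacle: the crux is checking that the fading-memory error bound is genuinely uniform in both $v$ and $k$, since this is what makes the static map $g$ well defined and lets the two steps be glued together. The only technical wrinkle is the boundary regime $k<n+1$, where the window reaches below time $0$; this is handled by treating the missing samples as part of the constant pad, so that the same $g$ and the same estimates continue to apply. The second step, by contrast, is essentially a black-box citation once continuity of $g$ on a compact set is in hand.
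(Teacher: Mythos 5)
The paper does not prove Lemma \ref{lem:tdnn} at all --- it is quoted from \cite[Theorem 1]{sandberg_approximation_1991} and \cite[Proposition 1]{park_criteria_1992} --- and your argument is exactly the standard proof of those results: a fading-memory truncation to a finite window (with the $2\inpbound\,w_{\itdnn+1}<\delta$ estimate giving uniformity in $v$ and $k$), followed by Cybenko/Hornik approximation of the resulting static map. Both steps and the boundary case $k<\itdnn$ are handled correctly. The one detail worth tightening is that $\inpclass$ is defined by a strict bound, so the windows populate the open cube $(-\inpbound,\inpbound)^{\itdnn+1}$; your $g$ is uniformly continuous there, hence extends continuously to the closed cube, and only then does the universal approximation theorem on a compact set apply.
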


%

To improve on the time-delay model structure above,
we follow \cite{schram_system_1996} and replace 
\eqref{eq:tdnn} by a set of GOBFs:
\begin{defn}(see \cite{ninness_unifying_1997})
Let
$\{\xi_0,\xi_1,\xi_2,\dotsc\}$ be a sequence of
(possibly complex) poles such that 
$|\xi_i|<1$ for all $i \in \setint_+$. The set of 
Generalized Orthonormal Basis Functions  is 
defined by the causal transfer functions
\begin{equation}
	\label{eq:gobfs} 
	\begin{split}
	G_0(z) &= z^d \frac{\sqrt{1-|\xi_0|^2}}{z-\xi_0} \\
	G_i(z) &= 
	z^d \frac{\sqrt{1-|\xi_i|^2}}{z-\xi_i} 
	\prod_{j=0}^{i-1} 
	\frac{1-\bar{\xi}_j z}{z-\xi_j}, \quad i = 1,2,\dotsc
	\end{split}
\end{equation}
with $d=0$ or $d=1$.
\end{defn}

Note that when $d=1$ and all the poles in 
$\{\xi_i\}_{i\in\setint_+}$ are equal to zero, we recover
the set of time-delay basis functions. 
The utility of GOBFs comes from
the fact that poles are allowed to be distinct, and 
thus the choice of poles may be adapted to the system which
is being approximated. 
The set of GOBFs may form a basis for various system
spaces. In particular, a simple condition ensures that it
is fundamental\footnote{A fundamental set in a normed
space $X$ is a subset $M \subset X $ whose span is dense in
$X$ \cite{kreyszig_introductory_1989}.} in 
$\ell_1(\setint_+)$, the space of causal discrete-time LTI 
systems whose impulse responses are absolutely integrable 
\cite[Corollary 10] {akcay_rational_1998}:

\begin{lem}
	\label{lem:fundamental_l1}
	The set \eqref{eq:gobfs}, with $d=1$, is fundamental in
	$\ell_1(\setint_+)$ if $k^{-\alpha}=O(1-|\xi_k|)$ for some 
	$0<\alpha<1$.
\end{lem}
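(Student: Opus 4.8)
\emph{The plan} is to read this off from the classical completeness theory for generalized orthonormal basis functions, and in particular to apply \cite[Corollary 10]{akcay_rational_1998} after checking that the GOBFs \eqref{eq:gobfs} with $d=1$ match the normalization used there. First I would verify that $\{G_i\}_{i\in\setint_+}$ is genuinely a subset of the space in question: since every pole satisfies $|\xi_i|<1$, each $G_i$ is a stable, proper rational transfer function whose impulse response decays geometrically, hence lies in $\ell_1(\setint_+)$. The assertion to establish is then that the closed linear span of these impulse responses, taken in the $\ell_1$ norm, is all of $\ell_1(\setint_+)$.

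The backbone of any such proof is a duality (annihilator) argument: the span is dense if and only if the only $c\in\ell_\infty(\setint_+)=\ell_1(\setint_+)^*$ with $\sum_k (G_i)_k\, c_k = 0$ for every $i$ is $c=0$. Exploiting the Takenaka--Malmquist/Blaschke structure of the $G_i$, this annihilation condition is converted into a statement about an analytic function built from $c$ whose projections onto the nested subspaces spanned by $G_0,\dots,G_i$ all vanish. The decisive analytic input is that completeness is governed by the rate at which the poles approach the unit circle, the classical $\ell_2$ (or $H^2$) threshold being divergence of the Blaschke sum $\sum_k (1-|\xi_k|)=\infty$.

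Here the hypothesis $k^{-\alpha}=O(1-|\xi_k|)$ with $0<\alpha<1$ gives $1-|\xi_k|\geq c\,k^{-\alpha}$ for large $k$, so $\sum_k (1-|\xi_k|)\geq c\sum_k k^{-\alpha}=\infty$ and the Blaschke sum diverges. The strict bound $\alpha<1$ does more than this: the $\ell_1$ (Wiener-algebra) topology is finer than the $\ell_2$ topology, so density in it cannot follow from Blaschke divergence alone and instead requires the poles to keep a quantitative, power-law distance from the circle. Turning this qualitative picture into the quantitative $\ell_1$ density estimate is precisely the technical content I would not reprove but delegate to \cite[Corollary 10]{akcay_rational_1998}; this is also the step I expect to be the main obstacle in a from-scratch argument.

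Finally I would justify the restriction to $d=1$. A sequence in $\ell_1(\setint_+)$ may have a nonzero time-zero coefficient (direct feedthrough), and this direction is reachable only if the basis functions themselves are biproper. With $d=1$ one has, for instance, $G_0$ supplying a nonzero value at $k=0$, whereas with $d=0$ every $G_i$ is strictly proper and has vanishing time-zero coefficient, so the span could never be dense. Thus $d=1$ is not a convenience but a necessity for the claim, and it is the normalization under which the cited corollary is stated.
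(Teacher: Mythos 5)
Your proposal is correct and follows essentially the same route as the paper, which states Lemma~\ref{lem:fundamental_l1} as a direct citation of \cite[Corollary 10]{akcay_rational_1998} and offers no independent proof, exactly the delegation you make for the quantitative $\ell_1$-density step. Your supporting checks---that each $G_i$ lies in $\ell_1(\setint_+)$, that the hypothesis $k^{-\alpha}=O(1-|\xi_k|)$ forces divergence of the Blaschke sum, and that $d=1$ is necessary so the span can reach sequences with nonzero time-zero coefficient---are all sound.
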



The next result, which, as far as we know, has not been 
previously published, generalizes Lemma \ref{lem:tdnn}:

\begin{thm}
\label{thm:gobfnn}
Consider a time-invariant 
causal 
operator 
$F:\ell_\infty(\setint_+)\to\setreal(\setint_+)$ with 
fading memory on $\inpclass$. Let 
$\{\xi_i\}_{i\in\setint_+}$ be a sequence of poles 
satisfying the condition stated in Lemma
\ref{lem:fundamental_l1}. For $\igobf \in \setnat$, let
$G^{(\igobf)}:\ell_\infty(\setint_+)
	\to\setreal(\setint_+)^{(\igobf+1)}$
be defined by
\begin{equation}
	\label{eq:gobfnn} 
	(G^{(\igobf)} v)_k = [(G_0 v)_k, (G_1 v)_k,\dotsc,
	(G_\igobf v)_k]^\top
\end{equation}
with the $G_i$ being the operators associated to
\eqref{eq:gobfs} with $d=1$. Let $\psi$ be a neural network
with $\nlayers \ge 1$, $\nunits^{(0)} = \igobf+1$ and 
$\nunits^{(\nlayers+1)} = 1$. Then, there are  
$\igobf,\nunits^{(\ell)}>0$ and real parameters
$W^{(\ell)}, b^{(\ell)}$ ($\ell = 1,\dotsc,\nlayers$)
such that for any $\epsilon > 0$,
\begin{equation}
	\label{eq:approx_nn_gobfs} 
	|(F v)_k - \psi((G^{(\igobf)} v)_k)| < \epsilon,
	\quad\quad k\in\setint_+
\end{equation}
for all $v \in \inpclass$.
\end{thm}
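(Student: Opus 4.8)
My plan is to reduce the claim to the time-delay universal approximation result already in hand, Lemma~\ref{lem:tdnn}, using the density statement of Lemma~\ref{lem:fundamental_l1}. The key observation is that each delay tap produced by $H^{(\itdnn)}$ is an $\ell_1$ operator, so the GOBF bank $G^{(\igobf)}$ can reproduce it arbitrarily well in $\ell_1$-norm; a bounded linear map then converts GOBF outputs into approximate delay outputs, and this map can be folded into the first layer of the network so that the composition is again an admissible network.

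First I would fix $\epsilon>0$ and apply Lemma~\ref{lem:tdnn} to obtain a tap length $\itdnn$ and a (single-hidden-layer) network $\psi_0$ with $\itdnn+1$ inputs and scalar output such that $|(Fv)_k-\psi_0((H^{(\itdnn)}v)_k)|<\epsilon/2$ for all $k\in\setint_+$ and all $v\in\inpclass$; a single hidden layer is admissible since the theorem only requires $\nlayers\ge1$. I would record that $\psi_0$, being a finite composition of affine maps and logistic activations (the latter $1/4$-Lipschitz), is globally Lipschitz with some constant $L_0$. Next, each tap of $H^{(\itdnn)}$ is the delay $z^{-j}$ ($j=0,\dotsc,\itdnn$), whose impulse response is finitely supported and hence lies in $\ell_1(\setint_+)$; by Lemma~\ref{lem:fundamental_l1} the GOBFs with $d=1$ are fundamental in $\ell_1(\setint_+)$, so for any $\eta>0$ there are finite combinations $\sum_i a_{ji}G_i$ with $\|z^{-j}-\sum_i a_{ji}G_i\|_{\ell_1}<\eta$. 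Taking $\igobf$ large (padding coefficients with zeros) lets all $\itdnn+1$ delays be approximated from the common bank $G_0,\dotsc,G_\igobf$; I collect the $a_{ji}$ into a matrix $A\in\setreal^{(\itdnn+1)\times(\igobf+1)}$.

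The final step is to transfer these $\ell_1$ estimates to uniform time-domain estimates and close the bound. By the discrete Young inequality $\|h*v\|_\infty\le\|h\|_{\ell_1}\|v\|_\infty$, each component of $(H^{(\itdnn)}v)_k-A\,(G^{(\igobf)}v)_k$ is bounded by $\eta\inpbound$ for every $v\in\inpclass$ and every $k$, so $\|(H^{(\itdnn)}v)_k-A\,(G^{(\igobf)}v)_k\|\le\sqrt{\itdnn+1}\,\eta\inpbound$ uniformly. Setting $\psi:=\psi_0\circ A$ gives a network of the same depth with $\igobf+1$ inputs and scalar output, since the linear precomposition is absorbed into the first weight matrix as $W^{(0)}A$ (the bias is unchanged). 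The Lipschitz property then yields $|\psi_0((H^{(\itdnn)}v)_k)-\psi((G^{(\igobf)}v)_k)|\le L_0\sqrt{\itdnn+1}\,\eta\inpbound$; choosing $\eta$ so that this is below $\epsilon/2$ and combining with the first step by the triangle inequality delivers \eqref{eq:approx_nn_gobfs}.

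I expect the main obstacle to be conceptual rather than computational: the crux is that fundamentalness in $\ell_1$ --- a statement about impulse responses --- upgrades to \emph{uniform} approximation of the filtered signals over the entire input class $\inpclass$, and this works precisely because of the $\ell_1$--$\ell_\infty$ pairing in Young's inequality (an $\ell_2$ density statement would not suffice here). A secondary point I would check carefully is that the delays genuinely lie in the closed span of the GOBFs: this fails for $d=0$, where every finite combination vanishes at time $0$ and so cannot approximate the identity tap $z^{0}$, and it is exactly why both Lemma~\ref{lem:fundamental_l1} and the theorem insist on $d=1$.
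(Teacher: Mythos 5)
Your proposal is correct and follows essentially the same route as the paper's proof: invoke Lemma~\ref{lem:tdnn}, use the $\ell_1$-fundamentality of Lemma~\ref{lem:fundamental_l1} together with the $\ell_1$--$\ell_\infty$ induced-gain bound to replace the delay bank by a linear combination of GOBFs uniformly over $\inpclass$, absorb that linear map into the first weight matrix, and close with the Lipschitz property of the network and the triangle inequality. The only cosmetic difference is that you approximate the individual delay taps $z^{-j}$ and precompose with a matrix $A$, whereas the paper directly approximates the filters $W^{(0)}_{\text{fir}}H^{(\itdnn)}(z)$ feeding the hidden units; these are equivalent by linearity.
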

\begin{proof}
	See Appendix \ref{proof:gobfnn}.
\end{proof}



To obtain a model structure that satisfies Lemma 
\ref{lem:fundamental_l1} and allows for a direct
term from input to output, we fix a finite sequence 
$\{\lambda_1,\lambda_2,\dotsc,\lambda_{\npoles}\}$, 
a number $n_{\text{rep}} \in \setnat$, and set 
$\xi_0 = 0$ and
		$\xi_{i+j \npoles} = \lambda_i$,
		for $i = 1,\dotsc,\npoles$ and
		$j = 0,\dotsc,n_{\text{rep}}-1$.    
	Let,
	\begin{equation*} 
	G(z) = [G_0(z), G_1(z),\dotsc,
	G_{\npoles n_{\text{rep}}}(z)]^\top
	\end{equation*}
	where the SISO transfer functions $G_i(z)$ are given
	by \eqref{eq:gobfs}, with $d=1$. Let 
	$\psi(\,\cdot\,;\theta)$ be an ANN specified by 
	$\nlayers$ and $\nunits$, with the vector of parameters
	$\theta$ encompassing the elements of the matrices
	$W^{(\ell)}$ and the elements of the bias vectors
	$b^{(\ell)}$. Then the model structure is given by 
	\begin{equation}
		\label{eq:model_structure} 
		\begin{split}
			\hat{y}_k(\theta) = \psi(G(q) v_k;\theta) 
		\end{split}
	\end{equation}
	where $q$ is the forward shift operator.
	
\subsection{GOBFs and spiking signals}

The set of GOBFs get their name from the fact that, under
a mild condition on the sequence of poles 
$\{\xi_i\}_{i\in\setint_+}$, it also forms a 
fundamental \textit{orthonormal} set in 
$\mathcal{H}_2({\mathbb{C}_+})$, 
the space of functions which are analytic on 
$\mathbb{C}_+=\{z\;:\;|z|>1\}$ and square integrable on the 
unit circle \cite{ninness_unifying_1997}. 
In particular, we have 
\[
	\sum_{k=0}^\infty (g^*_i)_k (g_j)_k = 0, \quad 
	\forall \, i\neq j, \quad i,j \ge 0
\]
where here the asterisk denotes complex conjugation, and
$(g_i)_k$ are GOBF impulse responses at time $k$.

This property is relevant in the context of neuronal system
identification. Figure \ref{fig:spike_gobfs} illustrates that 
for GOBFs defined with timescales which are much larger than
those of a hypothetical input spike with voltage trace $v_s$, the 
responses $g_i * v_s $ to the spike are similar to the 
(mutually orthogonal) impulse responses $g_i$. This means that, 
for GOBFs with a relatively slow timescale, a fast input spike 
behaves approximately as an impulse, and thus the outputs of different 
slow-timescale GOBFs to the same fast spike will be close to orthogonal. 
In the model structure \eqref{eq:model_structure}, approximate 
orthogonality of the signals at the output of the GOBFs ensures that the ANN 
nonlinearity receives a rich set of inputs to operate on.

\begin{figure}
	\centering
	\if\usetikz1
	\begin{tikzpicture}
		\begin{axis}[height=4.5cm,
			width=8cm,
			axis on top,
			axis y line = left,
			axis x line = bottom,
			ylabel={$(v_s)_k \;\mathrm{[mV]}$}, 
			xlabel={$kt_s\;\mathrm{[ms]}$}, 
			xmax=100,
			ymin=-0.1,
			each nth point={3},
			extra x ticks = {6.65},
			extra x tick label = {$t_1$},
			name=v_gobf
			]			
			\addplot [color=black,thick]
				table[x index=0,y index=1] 
					{./Data/gobf_dt_spike.txt};	

			\end{axis}
		\begin{axis}[height=4.5cm,
			width=8cm,
			axis on top,
			axis y line = left,
			axis x line = bottom,
			ylabel={$(g_i*v_s)_k$}, 
			xlabel={$k t_s\;\mathrm{[ms]}$}, 
			xmax=100,
			ymin=-2 ,ymax=6,
			each nth point={3},
			anchor = north,
			yshift = -2cm,	
			legend entries={$i=0$,$i=1$,$i=2$,$i=3$},
			legend pos =  north east,
			at={(v_gobf.south)},
			name=y_gobf
			]
					
			\foreach \i/\j/\k in 
				{2/blue/solid,3/green/solid,
					4/red/solid,5/cyan/solid}{
				\edef\temp{\noexpand
					\addplot [color=\j,\k,thick]
					table[x index=0,y index=\i] 
					{./Data/gobf_dt_spike.txt};
				}\temp
			}
					
		\end{axis}
		\begin{axis}[height=4.5cm,
			width=8cm,
			axis on top,
			axis y line = left,
			axis x line = bottom,
			ylabel={$(g_i)_k$}, 
			xlabel={$kt_s\;\mathrm{[ms]}$}, 
			xmax=100,
			ymin=-2,ymax=6,
			each nth point={3},
			anchor = north,
			yshift = -2cm,
			legend entries={$i=0$,$i=1$,$i=2$,$i=3$},
			legend pos =  north east, 
			name=h_gobf,		
			at={(y_gobf.south)}
			]
					
			\foreach \i/\j/\k in 
				{2/blue/solid,3/green/solid,
					4/red/solid,5/cyan/solid}{
				\edef\temp{\noexpand
					\addplot [color=\j,\k,thick]
					table[x index=0,y index=\i] 
					{./Data/gobf_dt_impulse.txt};
				}\temp
			}
					
		\end{axis}
	\end{tikzpicture}
	\else
		\includegraphics[scale=1]{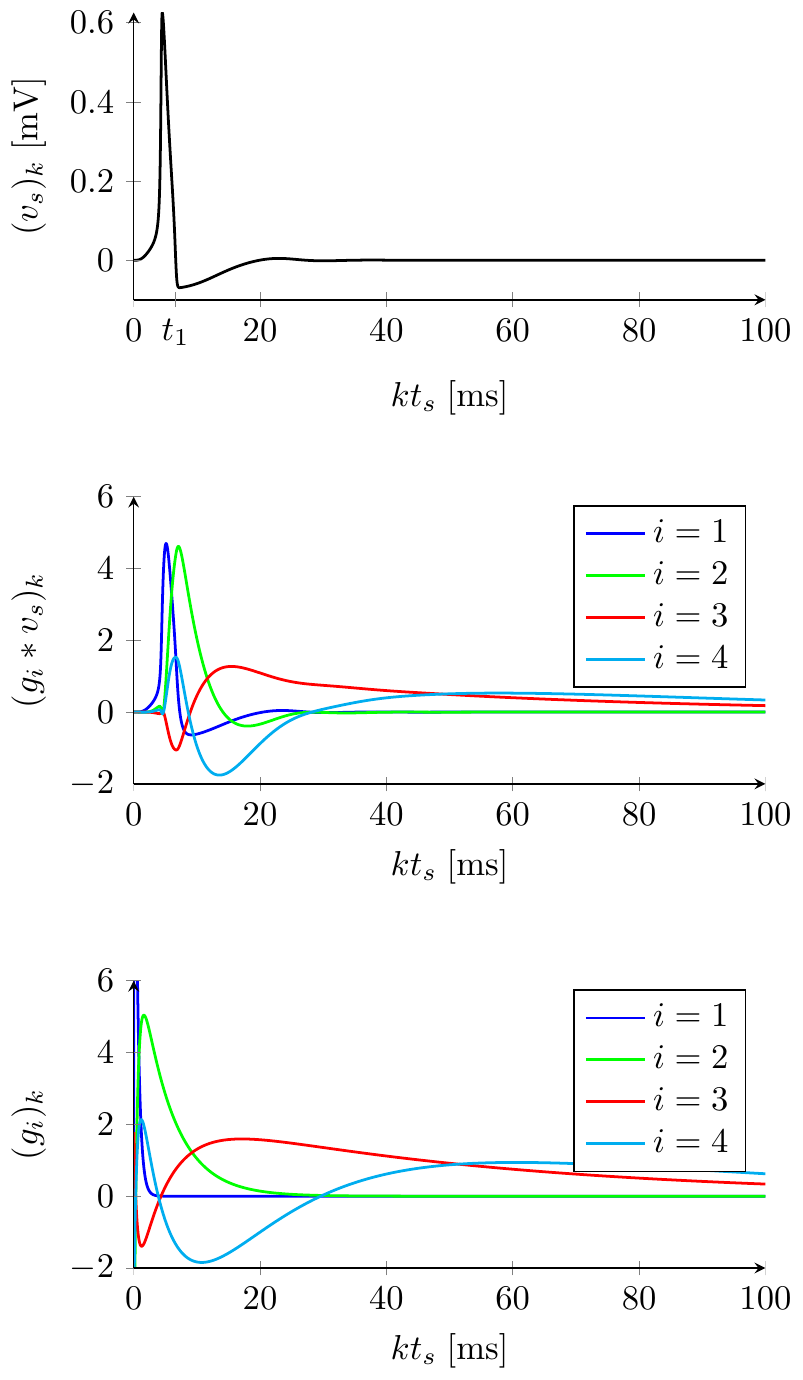}
	\fi
	\caption{Comparison between impulse responses (bottom) and 
	spike responses (middle) of four GOBFs given by
	\eqref{eq:gobfs} with $d=0$, $\xi_0 = 0.9802$, $\xi_1= 0.9980$, 
	$\xi_2= 0.9998$ and $\xi_3= 0.9995$.  The spike (top) used to
	compute the responses $g_i * v$ was normalized so that the
	area between $t=0$ and $t=t_1$ equals one (the same area
	of a Dirac impulse). The impulse responses $g_i$ are mutually 
	orthogonal, while the responses	$g_i * v_s$ are not. However,
	the responses  $g_3*v_s$ and $g_4*v_s$ are close to $g_3$ and
	$g_4$, respectively, and thus are close	to being orthogonal
	to each other. This can be explained by the	fact that the
	timescales of $g_3$ and $g_4$ are much larger than those
	contained in the spike. The sampling time used is 
	$t_s = 0.01$ ms.
	}	
	\label{fig:spike_gobfs}
\end{figure}

\subsection{GOBF pole selection for neuronal 
identification}
\label{sec:pole_selection} 

Since the choice of GOBF poles $\{\xi_i\}$
in the model structure is arbitrary, we pose the
question of how to choose them in order to minimize the
number of basis functions required to approximate 
\eqref{eq:fm_channels} to a given
tolerance. If the operator $\Fint$ were linear and finite
dimensional, with a transfer function $\Fint(s)$, then the 
obvious choice would be to place $\{\xi_i\}$ as close 
as possible to the poles of $\Fint(s)$. 
However, since $\Fint$ is nonlinear and, in
general, infinite-dimensional, it is not immediately clear
how to choose $\{\xi_i\}$. 
With that in mind, we now consider a heuristic for choosing
the GOBF poles, keeping the discussion at the conceptual
level. 

The heuristic involves choosing the GOBF poles so
as to ensure that the linearized model structure 
\[
	\delta\hat{y}_k(\theta) = 
	\left.\frac{\partial \psi(u;\theta)}{\partial u}
	\right|_{u=G(1)\bar{v}}
	G(q) \delta v_k
\]
(whose poles are $\{\xi_i\}$) is able to approximate
well the small-signal behavior of $\Fint$ around a particular
setpoint $\bar{v}\in\setreal$. This is the behavior of 
$\Fint$ subject to an input $v_k = \bar{v} + \tilde{v}_k$, 
$k\ge 0$, where $|\tilde{v}_k|$ has a small amplitude. 
To identify excitable neuronal systems, the choice of 
the setpoint $\bar{v}$ can be informed by the 
closed-loop nature of 
\eqref{eq:DT_v_system}-\eqref{eq:fm_channels}.

We argue that this setpoint $\bar{v}$ should be a point of
\textit{ultra-sensitivity} \cite{franci_sensitivity_2019} 
of the system \eqref{eq:DT_v_system}-\eqref{eq:fm_channels}.  
To explain this feature of neuronal behaviors, we consider 
a case in which the fading memory operator $\Fint$ has a smooth
finite-dimensional realization. Since $\Fint$ has fading 
memory, when it is subject to a constant input 
$v_k = \bar{v}$, $k\ge 0$, its output converges to 
a unique constant value $i_\infty\in\setreal$, that is,
$(\Fint v)_k \to i_\infty$ as $k\to\infty$ (see, e.g.,
\cite{boyd_fading_1985}). 
Let $d\Fint(z;\bar{v})$ be the transfer function 
associated to the linearized realization of $\Fint$ 
at $\bar{v}$.
Because $\bar{v}$ is also an equilibrium of the 
closed-loop system 
\eqref{eq:DT_v_system}-\eqref{eq:fm_channels} for
$i_k = i_\infty$, we can write down the transfer function
of the linearized closed-loop system at $\bar{v}$ and
$i_\infty$ as 
\begin{equation}
	\label{eq:closed_loop_tf} 
	T(e^{j\omega\samplingT};\bar{v}) = 
	\frac{\samplingT}{c (e^{j\omega\samplingT} - 1)+
		\samplingT \, 
		d\Fint(e^{j\omega \samplingT};\bar{v})}
\end{equation}
where $\omega \ge 0$ denotes the frequency variable.
Neuronal systems are characterized by the existence of at least 
one point $(\bar{\omega},\bar{v})$ where \eqref{eq:closed_loop_tf} 
is singular. At such a point, $T(z;\bar{v})$ has a pole on the 
unit circle, which, in terms of state-space dynamics, corresponds 
to an equilibrium bifurcation at $\bar{v}$ (see 
\cite{izhikevich_dynamical_2007} for a state-space 
perspective). 
For system identification, the most relevant 
bifurcation occurs at the equilibrium voltage $\bar{v}=v^*$ 
where the neuronal behavior transitions from a constant-steady 
state to an oscillatory or chaotic motion. We call $v^*$ a
point of ultra-sensitivity \cite{franci_sensitivity_2019}.

Importantly, the point of ultra-sensitivity can be characterized
experimentally; this can be done, for instance, by stepping 
the input $i_k$ to different constant values, as illustrated
in Figure \ref{fig:HH_full_Example}.
The critical role that $v^*$ plays in the dynamics of
\eqref{eq:DT_v_system}-\eqref{eq:fm_channels} suggests that 
we should place the GOBF poles as close as possible to 
the (stable) poles of $d\Fint(z;v^*)$, since those are 
the poles that shape the closed-loop frequency-response
\eqref{eq:closed_loop_tf} at a point of ultra-sensitivity. 

This heuristic requires that the poles of 
$d\Fint(z;v^*)$ be identified prior to identifying $\Fint$. 
But since small changes in the current $i_k$ may cause large 
voltage deviations away from $v^*$, it is not immediately 
clear how to achieve that. The answer lies in 
\textit{voltage-clamp}: the output feedback law 
\eqref{eq:output_feedback} can be used to stabilize the system 
at the voltage $v^*$, suppressing any oscillations and 
allowing the use of a low-amplitude input $r_k$ to probe
the small-signal behavior of $\Fint$ around $v^*$. If signal
amplitudes are kept low, linear system identification methods
can be applied to the measurements obtained under the
output feedback law \eqref{eq:output_feedback}, leading to 
estimates for the poles of $d\Fint(z;v^*)$; these are in turn
used to define the GOBFs. 

This heuristic leads to a systematic choice for the GOBF poles 
that takes into account the ultra-sensitivity of neuronal systems. 
Notice that, in principle, a finite-dimensional $\Fint$ is 
not required.
The heuristic is in line with the idea that a pre-processing 
linear identification step should be implemented, if possible, in 
order to make an informed choice on GOBF poles for nonlinear 
system identification \cite{tiels_wiener_2014}. 



\subsection{Parameter estimation and validation}

In practice, given measurements of $v_k$ and the input 
$i_k$, the system 
\eqref{eq:DT_v_system}-\eqref{eq:fm_channels} can be
identified by solving 
\begin{equation}
	\label{eq:optimization_problem} 
	\min_{\eta,\theta} \frac{1}{N}
	\sum_{k=0}^N \left(\frac{v_{k+1}-v_k}{\samplingT}
	- (\hat{y}_k(\theta) + \eta \, i_k) \right)^2	
\end{equation}
where $\eta$ is used to obtain an estimate of $1/c$.
Because $\hat{y}_k(\theta)$ is a feed-forward ANN,
the above problem can be efficiently solved with
backpropagation algorithms. Once this is achieved,
we can define
\begin{equation}
	\label{eq:cl_model}
	\hat{v}_{k+1} = \hat{v}_k + t_s \eta
	(-\psi(G(q)\hat{v}_k;\theta) + i_k)
\end{equation}
to obtain a closed-loop identified neuronal model.

Due to the frequent bifurcations happening in a 
neuronal model, different metrics are used for
validating the subthreshold and the superthreshold
(i.e., spiking) dynamics of the model 
\cite[Section 10.3]{gerstner_neuronal_2014}. 
Here, we will focus on validating the latter. 
This can be done by comparing the spike timings of
the identified model, denoted by $\spm$, with the
spike timings of the validation dataset, denoted by
$\spd$ (we assume the timing of a spike is given by
the location of its maximum). Spike timings define
impulse trains (called spike trains) given by $s_k = \sum_{\spd} 
\delta_{k\spd}$ and $\hat{s}_k = \sum_{\spm}
\delta_{k\spm}$, where $\delta_{ij}=1$ for $i=j$ and
$\delta_{ij}=0$ otherwise.  
A measure of spike coincidence can be defined using 
the inner product of smoothed spike trains,
\[
	(s,\hat{s}) = \sum_{k=0}^N (w*s)(k) (w*\hat{s}_k)(k),
\]
where $w_k$ is a smoothing kernel. Here, we will
use a Gaussian kernel 
$w_k = \exp(-k^2/(2\rho^2))/\sqrt{2\pi\rho^2}$. The
angular separation between the smoothed spike
trains, measured by
\begin{equation}
	\label{eq:spike_coincidence} 
	\Delta_\rho = \frac{(s,\hat{s})}{\sqrt{(s,s)}\sqrt{(\hat{s},\hat{s})}},
\end{equation}
is a good measure of spike coincidence: 
$\Delta_\rho$ approaches $1$ for very similar 
spike trains, and $0$ for very different ones.

\section{Example}
\label{sec:example} 

The purpose of this section is to illustrate the impact 
of the choice of basis functions \eqref{eq:gobfs} on the 
identification of a single-compartment neuron. 
For this illustration, we use a neuronal model 
from the crab stomatogastric ganglion (STG), a system 
responsible for producing rhythmic muscular activity in the 
crab's stomach \cite{marder_understanding_2007}. Neurons 
in the STG are capable of bursting autonomously, and, to 
our knowledge, they have never been successfully identified 
using a generic model structure. To generate data for 
parameter estimation, we use a forward-Euler discretization
of the conductance-based STG neuron model described in 
\cite[Figure 1.A.a]{franci_robust_2017}; see also
\cite[pp. 2318-2319]{liu_model_1998} for the model's 
ion channel kinetics. 
The model in question is a state-space model with 
12 states representing the membrane voltage and 
six different types of ionic currents. It contains 
over a hundred parameters, which were originally 
determined by first using ad-hoc methods for fitting 
individual ionic currents, and then hand tuning the 
parameters to match the observed neuronal behavior.

We compare the validation performance of the closed-loop
model \eqref{eq:cl_model} in two cases: when the basis 
functions used to define $G(z)$ are given 
by time-delays, and when they are given by GOBFs with 
poles chosen in accordance with Section \ref{sec:pole_selection}.
We denote the basis function vector in each of these
cases by $G^{\mathrm{td}}(z)$ and $G^{\mathrm{gobf}}(z)$,
respectively.
In both cases, we use the same two-layer ANN parametrization, 
with $\nunits^{(1)} = 15$ and $\nunits^{(2)} = 12$.
Letting $n_{\mathrm{td}}$ be a maximum time delay, we set 
$G^{\mathrm{td}}(z) = [1, z^{-1},\dotsc,z^{-(n_{\mathrm{td}}-1)}]^\top$.
To choose the GOBF poles used to define $G^{\mathrm{gobf}}(z)$, 
for brevity, we assume knowledge of the local dynamics of 
the continuous-time STG model\footnote{For numerical 
simulations where the poles of $d\Fint(z;\bar{v})$ are 
identified using output feedback and linear identification methods, see 
\cite[Section 4.6]{burghi_feedback_2020}.}. It is known that 
the STG model starts bursting due to a saddle-node bifurcation 
close to $i_\infty = -0.25$ and $\bar{v} = -49 \; \mathrm{mV}$. 
At this equilibrium, the linearized $\Fint$ has 
poles given by $\{\xi^{\mathrm{ct}}_i\}_{i=1,\dotsc,11} = \{-7.716$, 
$-0.560$, $-0.179$,$-0.156$, $-0.112$, $-0.074$, $-0.050$, 
$-0.037$, $-0.019$,$-0.018$, $-0.004\}$. We define twelve
GOBFs ($d=1$) by setting discrete-time
poles at $\xi_0 = 0$ and $\xi_i = 1+\samplingT\xi^{\mathrm{ct}}_i$, 
with $\samplingT$ the sampling time (we use $n_{\text{rep}}=1$).

Using $\samplingT=0.0075$, we indentify the excitability 
of the discretized STG model based on its response $v_k$ to the input
\begin{equation}
	\label{eq:excitation} 
	\iApp_k = -0.5 + \varepsilon_k
\end{equation}
where $\varepsilon_k$ is a Gaussian white-noise signal
such that $\sigma[\varepsilon_k] = 20$ $\mathrm{\upmu A/cm^2}$.
We obtain a training and a validation dataset, both
of length $N\samplingT\approx 10\,\mathrm{s}$, using
two different realizations of \eqref{eq:excitation}.
After training\footnote{We trained the ANNs of the
two model structures  by solving 
\eqref{eq:optimization_problem} with the Levenberg-
Marquadt backpropagation algorithm. The models were
trained with ten different sets of randomly selected
initial parameters, and the best fitted models were
used in the results. To eliminate transient effects,
we discarded $1\,\mathrm{s}$ of data from the 
training dataset.} each of the models defined with
$G^{\mathrm{td}}$ and $G^{\mathrm{gobf}}$, we
simulated the resulting system \eqref{eq:cl_model}
with the validation input, obtaining  
$\hat{v}^{\mathrm{td}}_k$ and
$\hat{v}^{\mathrm{gobf}}_k$, respectively. 

Figure 
\ref{fig:STG_coloured_nrep_1_na1_15_na2_12_validation}
shows the validation voltage $v_k$ and the simulated
voltages $\hat{v}^{\mathrm{td}}_k$ and 
$\hat{v}^{\mathrm{gobf}}_k$ during; we have used 
$n_\mathrm{td} = 12$ so that both models have the
same number of basis functions. The spike
coincidence metric 
\eqref{eq:spike_coincidence}, computed with
$\rho=3$ using the full validation 
dataset\footnote{The choice of $\rho=3\,\mathrm{ms}$ 
for the smoothing kernel standard deviation was
based on the mean width of the observed spikes,
which was of $6\,\mathrm{ms}$.}, was 
$\Delta_3 = 0.42$ for the model with time-delays, 
and $\Delta_3 = 0.73$ for the model with GOBFs. 
Thus it can be said regarding the model structure
\eqref{eq:model_structure} that using GOBFs with 
judiciously chosen poles leads to better results
than using time delay basis functions. In 
particular, the model obtained with a maximum delay
of $n_\mathrm{td} = 12$ samples can \textit{spike},
but it cannot \textit{burst}. This highlights the
importance of long timescales in bursting models.

\begin{figure}
	\centering
	\if\usetikz1
	\begin{tikzpicture}
		\begin{axis}[height=3.75cm,width=8cm,
		ylabel={$v_k,\hat{v}_k^{\mathrm{gobf}} \; \mathrm{[mV]}$}, 
		axis y line = left,
		axis x line = bottom,
		ymin=-75,ymax=55,
		legend pos=outer north east,
		each nth point={15},
		xtick=\empty,
		tick label style={font=\small},
		label style={font=\small},
		name = v_gobf
		]
		
		\addplot [color=blue,opacity=1]
		table[x index=0,y index=1]{./Data/GOBF_validation.txt};
		\addplot [color=red,opacity=1,dashed]
		table[x index=0,y index=2]{./Data/GOBF_validation.txt};

	\end{axis}
	\begin{axis}[height=3.75cm,width=8cm,
		ylabel={$v_k,\hat{v}_k^{\mathrm{td}} \; \mathrm{[mV]}$}, 
		axis y line = left,
		xlabel={$k\samplingT \; \mathrm{[ms]}$}, 
		axis x line = bottom,
		xtick={7000,7500,8000},
		ymin=-75,ymax=55,
		tick label style={font=\small},
		label style={font=\small},
		legend pos=outer north east,
		each nth point={15},
		anchor = north west,
		at={(v_gobf.south west)},
		yshift = -0.5cm
		]
		\addplot [color=blue,opacity=1]
		table[x index=0,y index=1]{./Data/TDNN_validation.txt};
		\addplot [color=red,opacity=1,dashed]
		table[x index=0,y index=2]{./Data/TDNN_validation.txt};
	\end{axis}
	\end{tikzpicture} 	
	\else
		\includegraphics[scale=1]{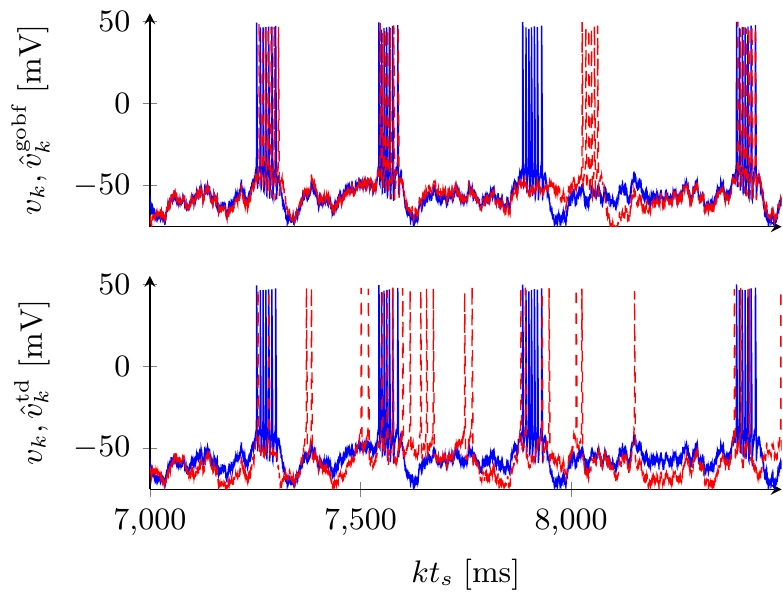}
	\fi
	\caption{Top: validation  $v_k$ of the true 
	STG model (blue, solid) and output 
	$\hat{v}_k^{\mathrm{gobf}}$ of the model defined
	with twelve GOBFs (red, dashed). 
	Bottom: 	 validation $v_k$ of the true 
	STG model (blue, solid) and output 
	$\hat{v}_k^{\mathrm{td}}$ of the model defined
	with $n_\mathrm{td} = 12$ time delays 
	(red, dashed). }		
\label{fig:STG_coloured_nrep_1_na1_15_na2_12_validation}
\end{figure}

\section{Conclusion}


In this paper, we have shown that GOBFs and static
ANNs can be used to
efficiently solve the problem of identifying
biophysical neuronal systems. One of the main
advantages of the method is its simplicity:
the estimation of model parameters can be done with 
off-the-shelf backpropagation algorithms. 
In addition, the method does not rely on measurements of
internal states of the neuron -- something which is,
in practice, impossible to obtain. 

The method also dispenses entirely with the conductance-based 
formalism, in which optimization of all model parameters 
may become an intractable problem. This comes, of 
course, at the cost of biophysical interpretability. 
In fact, further work on this topic should aim to 
understand how much the model structure discussed here can
be improved so as to provide more interpretability while
retaining its capabilities as a universal approximator.
 
Finally, we point out that while the use of feedforward 
ANNs for the identification of nonlinear systems 
is classical, such networks can only be guaranteed to 
approximate fading memory operators, de facto excluding 
the excitable nature of neuronal systems. Our approach 
shows that by retaining the biophysical feedback structure 
of the model in the identification step, such a limitation
can be overcome. 

\appendix
\section{Proofs}
\subsection{Proof of Theorem \ref{thm:gobfnn}}
\label{proof:gobfnn} 
We first prove the result for a single-layer ANN 
($\nlayers = 1$) by modifying the input	weights 
of the ANN obtained from Lemma \ref{lem:tdnn}. 
Let $\epsilon_0 > 0$ be given. By Lemma \ref{lem:tdnn},
there is a number $\itdnn=\itdnn(\epsilon_0)>0$, and
parameters $\nunits^{(1)}>0$ 
and $W_{\text{fir}}^{(0)}~\in~\setreal^{\nunits^{(1)}
\times (\itdnn+1)}$, defining an artificial neural
network $\psi:\setreal^{(\itdnn+1)}\to\setreal$ 
such that
\[
	|(F v)_k - \psi((H^{(\itdnn)} v)_k)| < \epsilon_0,
	\quad\quad k\in\setint_+
\]
for all $v \in \inpclass$.
Let $W_{\text{obf}}^{(0)} \in 
\setreal^{\nunits^{(1)}\times (\igobf+1)}$, and consider
the vector of transfer functions
\[
	Q^{(\itdnn,\igobf)}(z) = 
	W_{\text{fir}}^{(0)}H^{(\itdnn)}(z) - 
	W_{\text{obf}}^{(0)}G^{(\igobf)}(z) 
\]	
By Lemma \ref{lem:fundamental_l1}, for every $\itdnn$
and every $\epsilon_1>0$, there exists an 
$\igobf=\igobf(\itdnn,\epsilon_1)>0$, and a certain
$W_{\text{obf}}^{(0)}$ such that
\[
	\|q_i^{(\itdnn,\igobf)}\|_{\ell_1} 
	= \sum_{k=0}^\infty |(q_i^{(\itdnn,\igobf)})_k|		
	\le 
	\epsilon_1, \quad\quad i=1,\dotsc,\nunits^{(1)},
\]
where 
$q_i^{(\itdnn,\igobf)}$ is the impulse response of the
$i^{\text{th}}$ element	of $Q^{(\itdnn,\igobf)}(z)$.
Now, let $\tilde{\psi}:\setreal^{\nunits^{(1)}}\to\setreal$ 
be defined by $\tilde{\psi}(W_{\text{fir}}^{(0)}u) = \psi(u)$.
Since the activation functions
\eqref{eq:act_functions_nn} are globally Lipschitz,
there exists a constant $l_0$ such that
\begin{equation*}
	\begin{split}
		|\tilde{\psi}(W_{\text{fir}}^{(0)}&(H^{(\itdnn)}v)_k) -
				\tilde{\psi}(W_{\text{obf}}^{(0)}(G^{(\igobf)}v)_k)|\\ 
		&\le l_0 
		\|W_{\text{fir}}^{(0)}(H^{(\itdnn)} v)_k
		 -W_{\text{obf}}^{(0)}(G^{(\igobf)} v)_k\|_\infty \\
		&= l_0 \|( Q^{(\itdnn,\igobf)} v)_k\|_\infty \\
		&\le l_0 \max_i 
		\|q_i^{(\itdnn,\igobf)}\|_{\ell_1} 
			\|v\|_{\ell_\infty}
			\le \, l_0 \, \epsilon_1 \inpbound				
	\end{split}
\end{equation*}
for all $k\in\setint_+$. Here, we have used the fact
that, for an arbitrary LTI system 
$Q:\ell_\infty \to \ell_\infty^{\nunits}$, 
$\max_{i=1,\dotsc,\nunits} \|q_i\|_{\ell_1}$ is the induced
system gain \cite{zhou_robust_1996}, and that the
elements of $\inpclass$ are bounded by 
$\inpbound$.
From the above inequalities, we obtain
\begin{equation*}
	\begin{split}
	|(F v)_k - \tilde{\psi}(W_{\text{obf}}^{(0)}(G^{(\igobf)}v)_k)| 
	&\le |(F v)_k - \psi((H^{(\itdnn)}v)_k)| \\
		+ |\psi((H^{(\itdnn)}&v)_k)
			- \tilde{\psi}(W_{\text{obf}}^{(0)}(G^{(\igobf)}v)_k)| 	\\
	&\le \epsilon_0 + l_0 \, \epsilon_1 \inpbound		
	\end{split}
\end{equation*}
for all $k\in\setint_+$ and all $v \in \inpclass$. Thus 
given $\epsilon>0$, we can choose $\epsilon_0 < \epsilon/2$ 
and $\epsilon_1 < \epsilon/(2 l_0 \inpbound)$, and there
exists a large enough $\igobf$ such that the ANN 
$\tilde{\psi}(W_{\text{obf}}^{(0)} \,\cdot\,)$ 
satisfies \eqref{eq:approx_nn_gobfs}. 

This proves the result for $L=1$. The fact that the 
result also holds for a multi-layer ANN ($L>1$) follows 
from the fact that
multi-layer ANNs with continuous activation
functions are capable of arbitrarily accurate
approximation (in the uniform norm) to any continuous
function over a	compact set 
\cite[Theorem 2.1]{hornik_multilayer_1989}.
A multi-layer ANN can thus be used to approximate
the single-layer network obtained above to
arbitrary precision, concluding the proof.

\bibliographystyle{plain}
\bibliography{references}

\end{document}